\newtheorem{theorem}{Theorem}
\newtheorem{lemma}{Lemma}
\newtheorem{corollary}[theorem]{Corollary}
\newtheorem{example}{Example}
\newtheorem{assumption}{Assumption}
\newtheorem{definition}{Definition}
\newenvironment{proof}[1][Proof]{\noindent\textbf{#1.} }{\ \rule{0.5em}{0.5em}}
\begin{document}

\title{\LARGE\bf On Non-Bayesian Social Learning}
\author{Pooya Molavi and Ali Jadbabaie}\affil{Department of Electrical and Systems Engineering, University of Pennsylvania,\\ Philadelphia, PA 19104, USA}\date{}
\maketitle

\sloppy

\singlespacing

\begin{abstract}
We study a model of information aggregation and social learning recently proposed by Jadbabaie, Sandroni, and Tahbaz-Salehi, in which individual agents try to learn a correct state of the world by iteratively updating their beliefs using private observations and beliefs of their neighbors. No individual agent's private signal might be informative enough to reveal the unknown state. As a result, agents share their beliefs with others in their social neighborhood to learn from each other. At every time step each agent receives a private signal, and computes a Bayesian posterior as an intermediate belief. The intermediate belief is then averaged with the belief of neighbors to form the individual's belief at next time step. We find a set of minimal sufficient conditions under which the agents will learn the unknown state and reach consensus on their beliefs without any assumption on the private signal structure. The key enabler is a result that shows that using this update, agents will eventually forecast the indefinite future correctly.
\end{abstract}


\section{Introduction}
\label{sec:intro}
We discuss a model of how individuals might use relevant information available to them to form opinions about different social, economic, and political issues. Oftentimes, the ``best'' action an agent can take is not obvious and depends on certain unknown parameters that determine the agent's utility function. Consider the example of an institution making an investment decision. The expected utility of different investment options depend, among other things, on the expected policies of the next administration, which in turn depend on how likely different political candidates are to win the election. One can think of the candidate who is the most likely to win as an unknown parameter in the investment decision of institutions. We assume that the value of the parameter is an unknown state of the world that is exogenous to our model and is determined at time zero by nature. 

Agents make relevant observations about the unknown state. Frequently, not all the agents make the same observations, and not all the observations are equally informative. Lack of access to all the relevant information is a motivation for individuals to share their opinions with others in order to learn from their personal experiences. The importance of social interactions on opinion dynamics have been documented in several examples such as consumer choice, and diffusion of medical innovations and agricultural practices~\cite{Goyal10}. 

We study the scenario in which each individual has two sources of information: his personal observations, and those of his neighbors in a social network (e.g. friends, neighbors, colleagues, etc.). However, agents might not have direct access to personal experiences of their neighbors. Instead, we assume that they can only observe their neighbors' beliefs, i.e., their subjective probabilities of different feasible realizations of the unknown state of the world. An alternative equally valid interpretation is that agents play a particular repeated game of imperfect information where each agent can only observe the actions of his neighbors, and the actions completely reveal the beliefs of acting agents.



To study the effect of social networks on learning, we analyze the model in \cite{Tahbaz_Sandroni_Jadbabaie09} in which agents use an update other than Bayes' rule to incorporate the new information available to them. Agents repeatedly interact with their neighbors and use a simple rule to update their beliefs. Each agent first forms the Bayesian posterior given his observed private signal, as an intermediate step. He then updates his belief to the convex combination of his Bayesian posterior and the beliefs of his neighbors. 

There are two major motivations for considering non-Bayesian protocols: The positive point of view comes from the choice theoretic studies showing existence of non-Bayesian opinion dynamics among people (see \cite{Kahneman_Tversky74, Camerer95,Rabin98} for instance). The second motivation comes from a normative point of view. Bayesian inference in social networks can be, except for certain simple scenarios, computationally complicated to carry out. Part of the complications is because there is no reason to believe that agents know the source of their neighbors' information. Rather, they have to infer it to be able to form an unbiased belief about the true state of the world. The complexities of Bayesian updating limit its applicability in practice. 

In \cite{Tahbaz_Sandroni_Jadbabaie09}, the authors show that, under some assumptions, this update eventually leads to social learning, even in finite networks: They show that agents can eventually forecast the \emph{immediate future} correctly. Furthermore, they will eventually learn the unknown state, if for each agent there exist a signal that is the more probable under the true state of the world than any other state. This assumption ``turns the deck'' in favor of learning by assuming that agents are infinitely often notified, indirectly, of the true state of the world.

We argue through a simple example that the assumption of having a ``revealing'' signal is very limiting. We show that agents will learn the state of the world under the much weaker assumption that they can distinguish the state collectively. Signals need not be independent among agents at the same time period. Instead, we require the signal structure to be such that the state is identifiable given the marginals of the likelihood function. We first prove that not only agents will forecast the immediate future correctly, but also they eventually will learn to forecast the indefinite future. We also show that there exist a signal sequence which is informative enough to let agents identify the true state of the world, even if no revealing signal exists. The results signify that even when none of the agents have enough information to learn the true state of the world, and in spite of individual signals not being revealing enough, social interaction can aggregate pieces of information available to the individuals such that each and everyone of them can distinguish the true state of the world. Furthermore, this paper suggests a distributed, computationally tractable algorithm for learning in networks that can be applied to real world problems such as parameter estimation in sensor networks (cf. \cite{Xiao_Boyd_Lall05} and the references therein).

Recently there has been a rapidly growing interest in models of learning in networks. For a survey of different models of non-strategic and strategic social learning see \cite{Goyal10}. Since in the presence of social networks fully Bayesian inference faces tractability problems, different types of simplifications have been proposed. The first group of models assumes that agents interact sequentially. Examples include models in \cite{Banerjee92, Bikhchandani_Hirshleifer_Welch92, Smith_Sorensen00, Banerjee_Fudenberg04, Lobel08}. In such models each agent, having observed the actions of some of the agents acting before, takes an action. Each agent makes only one decision, and cannot reverse or change his choice. The other group of models suggests non-Bayesian rules of thumb for belief update. Examples include \cite{DeGroot74, Ellison_Fudenberg93, Ellison_Fudenberg95, Bala_Goyal98, DeMarzo03, Golub_Jackson10, Parandeh10, Tahbaz_Sandroni_Jadbabaie09}.


\section{The Model}
\label{sec:model}
The social learning model we consider, was first proposed in \cite{Tahbaz_Sandroni_Jadbabaie09}. Time is discrete and there is a finite number of agents, signals, and states of the world. 

Let $\Theta$ be the finite set of possible states of the world, and $\theta^*\in\Theta$ be the true state that is determined at time zero by nature, and is unchanged thereafter. Let $\mathcal{N}=\{1,2,\dots,n\}$ be the set of agents. At time $t\geq 0$ each agent $i$ has a belief about the true state, denoted by $\mu_{i,t}(\theta)$, which is a probability distribution over $\Theta$. 

At each time period $t\geq 1$ each agent $i$ observes a private random signal $\omega_{i,t}\in S_i$ where $S_i=\{s_i^1,s_i^2,\dots,s_i^{M_i}\}$ is the set of possible signals for agent $i$. Conditional on $\theta\in\Theta$ being the state of the world, the observation profile $\omega_t=(\omega_{1,t},\omega_{2,t},\dots,\omega_{n,t})$ is generated according to the likelihood function $\ell(\cdot\vert\,\theta)$ with $\ell_i(\cdot\vert\,\theta)$ as its $i$th marginal. Let $\mathbb{P}^\theta=\ell(\cdot\vert\,\theta)^\mathbb{N}$ be the product measure that determines the realization of signals conditioned on $\theta$ being the state of the world. This definition allows for signals to be correlated among agents at the same time period, but makes them independent over time. Without loss of generality we assume that $\ell_i(s_i\,\vert\theta^\ast)>0$ for all $s_i\in S_i$; $S_i$ is only the set of signals that are realized with positive probability conditioned on the true state of the world being $\theta^*$.

Let $\bar{\Theta}_i=\{\theta\in\Theta:\ell_i(s_i\vert\,\theta)=\ell_i(s_i\vert\,\theta^\ast) \text{ for all } s_i\in S_i\}$ be the set of states that are observationally equivalent to the true state $\theta^\ast$ from the point of view of agent $i$, and let $\bar{\Theta}=\bar{\Theta}_1\cap\dots\cap\bar{\Theta}_n$ be the set of states that are observationally equivalent to the true state of the world from the point of view of all agents. 

$(\Omega,\mathcal{F},\mathbb{P})$ is the probability triple, where $\Omega=(\prod_{i=1}^n {S_i})^\mathbb{N}$, $\mathcal{F}$ is the smallest $\sigma$-field that makes all $\omega_{i,t}$ measurable, and $\mathbb{P}=\mathbb{P}^{\theta^*}$ is the probability distribution determining the realization of signals. $\mathbb{N}$ stands for the set of natural numbers. We use $\omega\in\Omega$ to denote the infinite signal sequence $(\omega_1,\omega_2,\dots)$, and $\mathbb{E}$ to denote the expected value operator with respect to the probability measure $\mathbb{P}$. Let $\mathcal{F}_{i,t}=\sigma(\omega_{i,1},\omega_{i,2},\dots,\omega_{i,t})$ be the filtration generated by the observations of agent $i$ up to time $t$, and let $\mathcal{F}_t$ be the $\sigma$-field generated by the union of all $\mathcal{F}_{i,t}$ for $1\leq i\leq n$. 

We say that the adapted random variables $X_t$ and $Y_t$ are asymptotically $\mathbb{P}$-almost surely equal, denoted by $X_t\overset{a.a.s.}{=}Y_t$, if there exist $\tilde{\Omega}\subseteq\Omega$ such that $\mathbb{P}(\tilde{\Omega})=1$, and for all $\omega\in\tilde{\Omega}$ and all $\epsilon>0$, there exist $T(\omega,\epsilon)$ such that for all $t_1,t_2> T(\omega,\epsilon)$, 
\begin{equation*}
\lvert X_{t_1}-Y_{t_2}\rvert<\epsilon.
\end{equation*}
It is an easy exercise to show that if $X_t\overset{a.a.s.}{=}Y_t$ and $Z_t\overset{a.a.s.}{=}W_t$, then $X_t \pm Z_t\overset{a.a.s.}{=}Y_t \pm W_t$, and $X_t Z_t \overset{a.a.s.}{=} Y_t W_t$.

The interactions between the agents are captured by a directed graph $G=(\mathcal{N},E)$. Let $\mathcal{N}_i = \{j\in\mathcal{N}: (j,i)\in E\}$ be the set of neighbors of agent $i$. It is assumed that agent $i$ can observe the belief of agent $j$ if there exist a directed edge from $i$ to $j$, that is $(i,j)\in E$. A graph is called \emph{strongly connected} if there exist a directed path from any vertex to any other one. 

Each agent $i$ starts with the initial belief $\mu_{i,0}(\theta)$ that $\theta$ is the true state of the world. At the end of period $t$, each agent observes the beliefs of his neighbors. At the beginning of the next period, agent $i$ receives the private signal $\omega_{i,t+1}$, and then uses the following rule to update his belief:
\begin{equation}\label{eq:dynamic}
\mu_{i,t+1}(\theta)=a_{ii}\mu_{i,t}(\theta)\frac{\ell_i(\omega_{i,t+1}\vert\,\theta)}{m_{i,t}(\omega_{i,t+1})}+\sum_{j\in\mathcal{N}_i}a_{ij}\mu_{j,t}(\theta),
\end{equation}
where $m_{i,t}(s_i)$ is defined for any $s_i\in S_i$ as
\begin{equation*}\label{def_m}
m_{i,t}(s_i) = \sum_{\theta\in\Theta} \ell_i(s_i\vert\,\theta)\mu_{i,t}(\theta).
\end{equation*}
In the update in equation (\ref{eq:dynamic}) each agent updates his belief to a convex combination of his own Bayesian posterior, given only his private signal and neglecting the social network, and his neighbors' previous period beliefs. $a_{ij}$ is the weight agent $i$ assigns to the opinion of agent $j$, and $a_{ii}$, called the \emph{self-reliance} of agent $i$, is the weight he assigns to his Bayesian posterior conditional on his private signal. We assume that $a_{ij}\in\mathbb{R}^{+}$ and $\sum_{j\in\mathcal{N}_i\cup\{i\}}a_{ij}=1$ for the beliefs to remain a probability distribution over $\Theta$ after the update. 

It is sometimes more convenient to use vector notations. We use $A$ to denote the $n\times n$ matrix whose $ij$ element is $a_{ij}$, and use $\mu_t(\theta)$ to denote the $n$ dimensional column vector whose $i$th element is $\mu_{i,t}(\theta)$. 

When there is no arrival of new information, this update becomes the same as DeGroot's naive learning model. Likewise, when $a_{ij}=0$ for all $j\neq i$, the model is the same as the Bayesian learning model with no network structure.

For all $t>0$, $\mu_{i,t}(\theta)$ and $m_{i,t}(s_i)$ are random functions adapted to $\mathcal{F}_{t}$, the former on $\Theta$ and the latter on $S_i$. $m_{i,t}(s_i)$ is the probability that agent $i$ assigns, at time $t$, to signal $s_i$ being observed in the next time step, hence, it is called agent $i$'s \emph{one step forecast}. We can extend this notion to define the \emph{$k$-step forecast} $m_{i,t}(s_{i,1},s_{i,2},\dots,s_{i,k})$ as the forecast at time $t$ of agent $i$ that the signal sequence $(s_{i,1},s_{i,2},\dots,s_{i,k})\in (S_i)^k$ will be realized in the next $k$ time steps
\begin{equation*}
\quad\displaystyle m_{i,t}(s_{i,1},\dots,s_{i,k})=\sum_{\theta\in\Theta}{ \ell_i(s_{i,1},\dots,s_{i,k}\vert\,\theta)\mu_{i,t}(\theta)},
\end{equation*}
where $\ell_i(s_{i,1},s_{i,2},\dots,s_{i,k}\vert\,\theta)$ is the shorthand for $\ell_i(s_{i,1}\vert\,\theta)$ $\ell_i(s_{i,2}\vert\,\theta)\dots \ell_i(s_{i,k}\vert\,\theta)$.

The update in equation (\ref{eq:dynamic}) is local in the sense that each agent only needs the beliefs of his immediate neighbors to compute it. 

\section{Asymptotic Learning}

In this section we find a set of sufficient conditions for learning when agents use equation (\ref{eq:dynamic}) to update their beliefs. First we have to define what we mean by learning. For a discussion of different notions of learning and how they are related see \cite{Blackwell_Dubins62}. The first one we examine here is the notion of \emph{weak merging of opinions}.
\begin{definition}
Agent $i$ learns the likelihood function on a sample path $\omega$, in the sense of weak merging of opinions, if along that path 
\begin{equation*}
m_{i,t}(\cdot)\to \ell_i(\cdot\vert\,\theta^\ast)\quad\text{as}\quad t\to\infty.
\end{equation*}
\end{definition}

When an agent learns the likelihood function in this sense, he will know the probability distribution according to which signals are generated. In other words, the agent can forecast the \emph{immediate future} correctly, as if he knows the likelihood function.

In \cite{Tahbaz_Sandroni_Jadbabaie09}, the authors show that if agents use the update in equation (\ref{eq:dynamic}), learning in the sense of weak merging of opinions occurs under the following assumptions:
\begin{assumption}\label{asmp1}
\mbox{}
\begin{itemize}
\item[(a)] The social network is strongly connected.
\item[(b)] All agents have strictly positive self-reliances.
\item[(c)] There exists an agent with positive prior belief on the true parameter $\theta^\ast$.
\end{itemize}
\end{assumption}

Assumption \ref{asmp1}.(a) allows for information to flow from any agent to any other one. Assumption \ref{asmp1}.(b) is to prevent agents from disregarding their personal experiences. Assumption \ref{asmp1}.(c) is what is known as a ``grain of truth'' in agent's prior belief. In the discrete setting, this is equivalent to absolute continuity of the initial forecasts with respect to the likelihood function, which is often a necessary condition for learning. The following theorem shows that weak merging of opinions occurs when Assumptions \ref{asmp1}(a)-(c) hold.

\begin{theorem}[Jadbabaie, Sandroni, \& Tahbaz-Salehi~\cite{Tahbaz_Sandroni_Jadbabaie09}]\label{m_to_l}
If Assumption \ref{asmp1} holds, then
\begin{equation*}
m_{i,t}(\cdot)\to \ell_i(\cdot\vert\,\theta^\ast)\quad\text{as}\quad t\to\infty,
\end{equation*}
with $\mathbb{P}$-probability one.
\end{theorem}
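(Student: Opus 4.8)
The plan is to exhibit a bounded submartingale whose compensator controls the forecast errors, so that martingale convergence forces those errors to vanish. Let $v$ be the left Perron eigenvector of $A$, normalized so that $v^\top A = v^\top$, $v_i>0$, and $\sum_i v_i = 1$; this exists because Assumption \ref{asmp1}(a)--(b) make $A$ an irreducible, aperiodic (hence primitive) stochastic matrix. The central quantity I would track is $W_t = v^\top\mu_t(\theta^\ast) = \sum_i v_i\mu_{i,t}(\theta^\ast)$, the network-averaged belief on the true state.

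First I would compute the one-step conditional expectation of the belief on $\theta^\ast$. Taking $\mathbb{E}[\,\cdot\mid\mathcal F_t]$ in the update \eqref{eq:dynamic} and using that $\omega_{i,t+1}\sim\ell_i(\cdot\mid\theta^\ast)$ gives
$$\mathbb{E}[\mu_{i,t+1}(\theta^\ast)\mid\mathcal F_t] = a_{ii}\mu_{i,t}(\theta^\ast)\sum_{s_i}\frac{\ell_i(s_i\mid\theta^\ast)^2}{m_{i,t}(s_i)} + \sum_{j\in\mathcal N_i}a_{ij}\mu_{j,t}(\theta^\ast).$$
The key algebraic observation is that $\sum_{s_i}\ell_i(s_i\mid\theta^\ast)^2/m_{i,t}(s_i) - 1$ equals the chi-square divergence $\chi^2_{i,t}:=\sum_{s_i}(\ell_i(s_i\mid\theta^\ast)-m_{i,t}(s_i))^2/m_{i,t}(s_i)\ge 0$, which vanishes exactly when $m_{i,t}(\cdot)=\ell_i(\cdot\mid\theta^\ast)$ (both sides being probability distributions on $S_i$). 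Hence $\mathbb{E}[\mu_{t+1}(\theta^\ast)\mid\mathcal F_t]\ge A\mu_t(\theta^\ast)$ componentwise, and multiplying on the left by $v^\top$ yields $\mathbb{E}[W_{t+1}\mid\mathcal F_t] = W_t + \sum_i v_i a_{ii}\mu_{i,t}(\theta^\ast)\chi^2_{i,t}\ge W_t$. Thus $W_t$ is a submartingale bounded in $[0,1]$.

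By the martingale convergence theorem $W_t$ converges a.s.; its Doob decomposition shows the predictable increasing part has bounded expectation, so its total increment is finite a.s., giving $\sum_{t}\sum_i v_i a_{ii}\mu_{i,t}(\theta^\ast)\chi^2_{i,t}<\infty$ a.s. Since $v_i>0$ and $a_{ii}>0$, each summand tends to zero, so $\mu_{i,t}(\theta^\ast)\,\chi^2_{i,t}\to 0$ for every $i$. To finish, I must convert this into $\chi^2_{i,t}\to 0$ (equivalently $m_{i,t}\to\ell_i(\cdot\mid\theta^\ast)$), which requires that the belief on the true state not vanish, i.e. $\liminf_t\mu_{i,t}(\theta^\ast)>0$ a.s. for each $i$.

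Establishing this no-collapse bound is the main obstacle. Assumption \ref{asmp1}(c) gives $W_0\ge v_k\mu_{k,0}(\theta^\ast)>0$ for the agent $k$ of part (c), so $\mathbb{E}[W_\infty]\ge W_0>0$ and the a.s.\ limit $W_\infty$ is positive on an event of positive probability. To upgrade this to an almost-sure, per-agent lower bound I would exploit the averaging term: whenever one agent's belief on $\theta^\ast$ is bounded below, primitivity of $A$ (positive self-reliances on a strongly connected graph) propagates a positive lower bound to every agent within $n-1$ steps, so $\min_i\mu_{i,t}(\theta^\ast)$ cannot drift to zero on the event $\{W_\infty>0\}$. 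The delicate remaining point is to rule out the collapse event $\{W_\infty=0\}$; the natural route, mirroring the pure-Bayesian case in which $1/\mu_t(\theta^\ast)$ is a convergent sum of likelihood-ratio martingales, is to argue that $\theta^\ast$ retains positive weight because it actually generates the data, so that its aggregate likelihood cannot be asymptotically dominated. Once $\liminf_t\mu_{i,t}(\theta^\ast)>0$ a.s.\ is secured, $\mu_{i,t}(\theta^\ast)\,\chi^2_{i,t}\to 0$ forces $\chi^2_{i,t}\to 0$; since the chi-square divergence controls the total-variation distance between $m_{i,t}$ and $\ell_i(\cdot\mid\theta^\ast)$, this yields $m_{i,t}(\cdot)\to\ell_i(\cdot\mid\theta^\ast)$ with $\mathbb{P}$-probability one, as claimed.
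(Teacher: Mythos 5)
First, a framing point: the paper never proves this theorem --- it is imported verbatim from \cite{Tahbaz_Sandroni_Jadbabaie09} (hence the attribution in the theorem header), so your attempt must be measured against the cited argument. Your opening half matches its skeleton and is correct as far as it goes: the identity $\sum_{s_i}\ell_i(s_i\vert\,\theta^\ast)^2/m_{i,t}(s_i)-1=\chi^2_{i,t}$ checks out, $W_t=v^\top\mu_t(\theta^\ast)$ is indeed a bounded submartingale with conditional increment exactly $\sum_i v_i a_{ii}\mu_{i,t}(\theta^\ast)\chi^2_{i,t}$, and the Doob-compensator argument validly yields $\mu_{i,t}(\theta^\ast)\,\chi^2_{i,t}\to 0$ almost surely.

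The gap is the one you yourself flag: you need $\liminf_t\mu_{i,t}(\theta^\ast)>0$ a.s., and you cannot rule out the event $\{W_\infty=0\}$. This is not a technicality that can be patched by the sketch you give. Nothing in the submartingale structure excludes collapse: once the increments vanish, $W_t$ is asymptotically a martingale, and a $[0,1]$-valued martingale started at $W_0>0$ can converge to $0$ with positive probability, so $\mathbb{E}[W_\infty]\geq W_0>0$ only delivers the conclusion on an event of positive probability while the theorem asserts probability one. The pure-Bayesian analogy you invoke also fails here, because the convex-combination step in equation (\ref{eq:dynamic}) destroys the exact martingale property of the likelihood ratios $\mu_{i,t}(\theta)/\mu_{i,t}(\theta^\ast)$ on which that argument rests. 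The cited proof avoids the trap by taking logarithms \emph{before} averaging: by concavity of $\log$, $\log\mu_{i,t+1}(\theta^\ast)\geq a_{ii}\log\bigl(\mu_{i,t}(\theta^\ast)\,\ell_i(\omega_{i,t+1}\vert\,\theta^\ast)/m_{i,t}(\omega_{i,t+1})\bigr)+\sum_{j\in\mathcal{N}_i}a_{ij}\log\mu_{j,t}(\theta^\ast)$, and since $\mathbb{E}\bigl[\log\bigl(\ell_i(\omega_{i,t+1}\vert\,\theta^\ast)/m_{i,t}(\omega_{i,t+1})\bigr)\vert\,\mathcal{F}_t\bigr]$ is the Kullback--Leibler divergence $D(\ell_i(\cdot\vert\,\theta^\ast)\,\Vert\, m_{i,t})\geq 0$, the process $Y_t=\sum_i v_i\log\mu_{i,t}(\theta^\ast)$ --- well defined and bounded below after at most $n-1$ steps, because strong connectivity deterministically spreads the grain of truth of Assumption \ref{asmp1}(c) to every agent --- is a nonpositive submartingale whose compensator dominates $\sum_t\sum_i v_i a_{ii}\,D(\ell_i(\cdot\vert\,\theta^\ast)\,\Vert\, m_{i,t})$. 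Almost-sure convergence of $Y_t$ then forces $D(\ell_i(\cdot\vert\,\theta^\ast)\,\Vert\, m_{i,t})\to 0$ with \emph{no belief weight attached}, Pinsker's inequality gives the theorem, and the no-collapse bound you were missing falls out for free since $Y_t$ converges to a finite limit. In short: your linear Lyapunov function carries the nuisance factor $\mu_{i,t}(\theta^\ast)$ in its drift, and removing it is essentially as hard as the theorem itself; the logarithmic one never produces it.
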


It can also easily be shown through counterexamples that none of the assumptions of the theorem can be relaxed. This theorem corresponds to weak merging of opinions. The much stronger notion of learning is asymptotic learning of the true state of the world.
\begin{definition}
Agent $i$ asymptotically learns the true parameter $\theta^\ast$ on a sample path $\omega$, if along that path 
\begin{equation*}
\mu_{i,t}(\theta^\ast)\to 1\quad\text{as}\quad t\to\infty.
\end{equation*}
\end{definition}

To prove asymptotic learning authors add the assumption that for any agent $i$, there exists a signal $\hat{s}_i\in S_i$ and a positive number $\delta_i$ such that

\begin{equation}\label{eq:star} \frac{\ell_i(\hat{s}_i\vert\,\theta)}{\ell_i(\hat{s}_i\vert\,\theta^\ast)}\leq\delta_i<1\quad\forall\theta\notin\bar{\Theta}_i.
\end{equation}

This assumption asks for existence of a signal that is more likely conditioned on $\theta^\ast$ being the true state of the world rather than conditioned on any other state in $\bar{\Theta}_i$ being the true state of the world. Under this assumption and provided that the conditions of Theorem \ref{m_to_l} hold, the authors prove that all agents asymptotically learn the true parameter $\theta^\ast$ with $\mathbb{P}$-probability one. The condition in (\ref{eq:star}) guarantees that there exist a revealing signal that is observed infinitely often. The following example shows that this is a weak assumption.
\begin{example}
\label{eg}
Consider a strongly connected graph on two agents. Assume that $\Theta=\{\theta^\ast, \theta_1,\theta_2\}$, and $S_1=S_2=\{H,T\}$. Also assume that the private signals of agents are independent and are generated according to the same probability distribution $\ell(s\vert\,\theta)$ which is given by Table \ref{tab}.
\begin{table}[!t]
\centering
\caption{The likelihood function of Example \ref{eg}}
\label{tab}
\renewcommand{\arraystretch}{1.5}
\begin{tabular}{l|c|c|}
\multicolumn{1}{r}{}
 &  \multicolumn{1}{c}{$H$}
 & \multicolumn{1}{c}{$T$} \\
\cline{2-3}
$\theta^\ast$ & $1/3$ & $2/3$ \\
\cline{2-3}
$\theta_1$ & $1/4$ & $3/4$ \\
\cline{2-3}
$\theta_2$ & $3/5$ & $2/5$ \\
\cline{2-3}
\end{tabular}
\end{table}
In this example there is no signal that satisfies condition (\ref{eq:star}). But the signal sequence $(H,T,T)$ is most likely under $\theta^\ast$ rather than any other $\theta\in\Theta$. Furthermore, $(H,T,T)$ is the shortest such signal sequence.
\end{example}

In this paper we show that to prove asymptotic learning no assumption should be made other than the \emph{distinguishability} of the true state of the world $\theta^\ast$.
\begin{assumption}[Distinguishability]\label{disting}
There is no $\theta\in\Theta$ that is observationally equivalent to $\theta^\ast$ from the point of view of all agents, that is
\begin{equation*}
\bar{\Theta}=\bar{\Theta}_1\cap\cdots\cap\bar{\Theta}_n=\{\theta^\ast\}.
\end{equation*}
\end{assumption}

This is obviously a necessary condition for the agents to learn the true state of the world. In Theorem \ref{thm_main} which is the main result of this section we show that it is also sufficient. To that end, we first show in Theorem \ref{m_to_l_k} that the correct forecasts of the agents can be extended into the future. To prove these results we first need to present a few preliminary results. The following theorem is a variation of Borel-Cantelli lemma which can be found, among many other places, as Theorem 5.3.2 in \cite{Durrett10}.
\begin{theorem}\label{thm_B-C}
Let $\mathcal{F}_t$, $t\geq0$ be a filtration with $\mathcal{F}_0=\{\emptyset,\Omega\}$ and $A_t$, $t\geq1$ a sequence of events with $A_t\in\mathcal{F}_t$. Then
\begin{equation*}
\{A_t \text{ infinitely often}\}=\Bigg\{\sum_{t=1}^{\infty}\mathbb{P}(A_t\vert\mathcal{F}_{t-1})=\infty\Bigg\}.
\end{equation*}
\end{theorem}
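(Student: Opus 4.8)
The plan is to realize the difference between the counting process and its conditional-expectation compensator as a martingale with bounded increments, and then read off the conclusion from the convergence/oscillation dichotomy for such martingales.

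First I would set $S_n=\sum_{t=1}^{n}\mathbf{1}_{A_t}$ and $T_n=\sum_{t=1}^{n}\mathbb{P}(A_t\mid\mathcal{F}_{t-1})$, with $S_0=T_0=0$, and define $M_n=S_n-T_n$. Since $A_t\in\mathcal{F}_t$ and $\mathbb{P}(A_t\mid\mathcal{F}_{t-1})$ is $\mathcal{F}_{t-1}$-measurable, each $M_n$ is $\mathcal{F}_n$-measurable, and
\begin{equation*}
\mathbb{E}[M_n-M_{n-1}\mid\mathcal{F}_{n-1}]=\mathbb{E}[\mathbf{1}_{A_n}\mid\mathcal{F}_{n-1}]-\mathbb{P}(A_n\mid\mathcal{F}_{n-1})=0,
\end{equation*}
so $M_n$ is a martingale; moreover $\lvert M_n-M_{n-1}\rvert=\lvert\mathbf{1}_{A_n}-\mathbb{P}(A_n\mid\mathcal{F}_{n-1})\rvert\le 1$, so its increments are bounded.

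Second, I would invoke the standard dichotomy for martingales with bounded increments (the result immediately preceding the cited statement in \cite{Durrett10}): with $\mathbb{P}$-probability one, either $M_n$ converges to a finite limit, or $\limsup_n M_n=+\infty$ and $\liminf_n M_n=-\infty$. If a self-contained argument is wanted, this follows from the martingale convergence theorem by stopping: for fixed $c>0$ the martingale stopped at $\tau_c=\inf\{n:M_n\le -c\}$ stays above $-c-1$ (the overshoot at $\tau_c$ is at most $1$ because the increments are bounded by $1$), hence, being bounded below, it converges almost surely; this shows $\{\inf_n M_n>-\infty\}\subseteq\{M_n\text{ converges}\}$ up to a null set, and the symmetric argument gives $\{\sup_n M_n<\infty\}\subseteq\{M_n\text{ converges}\}$, which together yield the dichotomy.

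Finally, I would conclude by cases, using that $S_n$ and $T_n$ are nondecreasing. On the event $C$ where $M_n$ converges, $S_n-T_n$ is bounded, so the two monotone sequences diverge together and $\{S_\infty=\infty\}\cap C=\{T_\infty=\infty\}\cap C$. On the oscillation event $D$, the bound $S_n-T_n\le S_n$ together with $\limsup_n(S_n-T_n)=+\infty$ forces $S_\infty=\infty$, while $S_n-T_n\ge -T_n$ together with $\liminf_n(S_n-T_n)=-\infty$ forces $T_\infty=\infty$; hence both sums are infinite on $D$. Since $\{A_t\text{ infinitely often}\}=\{S_\infty=\infty\}$ and the right-hand side of the claim is $\{T_\infty=\infty\}$, the two events coincide up to a $\mathbb{P}$-null set, which is the asserted identity. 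The main obstacle is the dichotomy in the second step; everything else is bookkeeping with monotone sequences, and the only delicate point in the dichotomy is controlling the overshoot of the stopped martingale, which is exactly where the bounded-increment property is used.
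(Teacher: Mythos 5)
Your proof is correct, and since the paper offers no proof of this statement --- it simply cites it as Theorem 5.3.2 of \cite{Durrett10} --- the relevant comparison is with that source, whose argument yours reproduces essentially verbatim: the compensated counting martingale $M_n=S_n-T_n$ with increments bounded by $1$, the convergence/oscillation dichotomy for bounded-increment martingales (proved, as you do, by stopping at $\tau_c=\inf\{n:M_n\le -c\}$ and controlling the overshoot), and the case analysis on the monotone sequences $S_n$, $T_n$. The only cosmetic point is that the stated set equality holds up to a $\mathbb{P}$-null set, which you correctly note and which is also the intended reading in \cite{Durrett10}.
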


The next theorem we need is the bounded convergence theorem for conditional expectations. For a proof see, for instance, page 263 of \cite{Durrett10}.
\begin{theorem}\label{dom_conv}
Suppose $Y_t\to Y$, $\mathbb{P}$-almost surely, and $\lvert Y_t\rvert\leq M$ for all $t$ where $M$ is a constant. If $\mathcal{F}_t\uparrow\mathcal{F}_\infty$ then with $\mathbb{P}$-probability one
\begin{equation*}
\mathbb{E}(Y_t\vert\,\mathcal{F}_t)\to\mathbb{E}(Y\vert\,\mathcal{F}_\infty).
\end{equation*}
\end{theorem}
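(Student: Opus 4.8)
The plan is to deduce this from Lévy's upward martingale convergence theorem (the statement that $\mathbb{E}(Z\mid\mathcal{F}_t)\to\mathbb{E}(Z\mid\mathcal{F}_\infty)$ holds $\mathbb{P}$-almost surely for any fixed integrable $Z$, valid precisely because $\mathcal{F}_t\uparrow\mathcal{F}_\infty$) combined with a truncation-by-supremum device that separates the two limits at play: the one in the integrand, $Y_t\to Y$, and the one in the conditioning field, $\mathcal{F}_t\uparrow\mathcal{F}_\infty$. First I would split the quantity of interest as
\[
\mathbb{E}(Y_t\mid\mathcal{F}_t)-\mathbb{E}(Y\mid\mathcal{F}_\infty)=\big[\mathbb{E}(Y_t\mid\mathcal{F}_t)-\mathbb{E}(Y\mid\mathcal{F}_t)\big]+\big[\mathbb{E}(Y\mid\mathcal{F}_t)-\mathbb{E}(Y\mid\mathcal{F}_\infty)\big]
\]
and control the two bracketed terms separately. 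The second term tends to zero $\mathbb{P}$-almost surely by Lévy's theorem applied to the fixed (bounded, hence integrable) random variable $Y$, since its conditioning field is the only thing moving.

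The first term is the substance of the argument. Here I would estimate
\[
\big|\mathbb{E}(Y_t\mid\mathcal{F}_t)-\mathbb{E}(Y\mid\mathcal{F}_t)\big|=\big|\mathbb{E}(Y_t-Y\mid\mathcal{F}_t)\big|\le\mathbb{E}\big(|Y_t-Y|\,\big|\,\mathcal{F}_t\big),
\]
the difficulty being that now both the integrand and the $\sigma$-field vary with $t$, so Lévy's theorem cannot be invoked directly against a fixed target. To freeze the integrand I would introduce the decreasing envelope $W_N=\sup_{t\ge N}|Y_t-Y|$, which satisfies $0\le W_N\le 2M$ and, by the hypothesis $Y_t\to Y$, decreases to $0$ $\mathbb{P}$-almost surely. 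For every $t\ge N$ one has $\mathbb{E}(|Y_t-Y|\mid\mathcal{F}_t)\le\mathbb{E}(W_N\mid\mathcal{F}_t)$; since $W_N$ is now fixed and integrable, Lévy's theorem applies and gives $\mathbb{E}(W_N\mid\mathcal{F}_t)\to\mathbb{E}(W_N\mid\mathcal{F}_\infty)$ $\mathbb{P}$-almost surely, whence
\[
\limsup_{t\to\infty}\mathbb{E}\big(|Y_t-Y|\,\big|\,\mathcal{F}_t\big)\le\mathbb{E}(W_N\mid\mathcal{F}_\infty)\quad\text{$\mathbb{P}$-a.s., for each fixed }N.
\]
Finally, letting $N\to\infty$ and using that $W_N\downarrow 0$ with $W_N\le 2M$, the conditional monotone (or dominated) convergence theorem yields $\mathbb{E}(W_N\mid\mathcal{F}_\infty)\downarrow 0$ $\mathbb{P}$-almost surely, so the left-hand side is zero almost surely and the first term vanishes. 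Combining the two estimates delivers the claim.

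The main obstacle, and the only genuinely delicate point, is the simultaneous movement of the integrand and the conditioning $\sigma$-field in the first term; the envelope $W_N$ is exactly what decouples them, letting me apply martingale convergence against a fixed target and only afterwards send $N\to\infty$. The boundedness hypothesis $|Y_t|\le M$ enters precisely here: it guarantees that each $W_N$ is integrable and licenses the final conditional convergence step.
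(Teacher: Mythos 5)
Your proof is correct and complete. The paper does not prove this theorem itself --- it simply cites page~263 of Durrett --- and your argument (splitting off $\mathbb{E}(Y\mid\mathcal{F}_t)-\mathbb{E}(Y\mid\mathcal{F}_\infty)$, handling it by L\'evy's upward theorem, and decoupling the moving integrand from the moving $\sigma$-field via the envelope $W_N=\sup_{t\ge N}\lvert Y_t-Y\rvert$ before sending $N\to\infty$) is precisely the standard proof given in that reference, so it is essentially the same approach the paper relies on.
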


The next two lemmas are technical lemmas which will be used in proof of Theorem \ref{m_to_l_k}.
\begin{lemma}\label{E_lm}
If Assumption \ref{asmp1} holds then
\begin{equation*}
\mathbb{E}(\frac{\ell_i(\omega_{i,t+1}\vert\,\theta)}{m_{i,t}(\omega_{i,t+1})}\vert\,\mathcal{F}_t)\to1\quad\text{as}\quad t\to\infty,
\end{equation*}
with $\mathbb{P}$-probability one for all $\theta\in \Theta$.
\end{lemma}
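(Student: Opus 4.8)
The plan is to reduce the statement to a direct computation of the conditional expectation followed by a limit passage justified by the already-established weak merging result (Theorem \ref{m_to_l}). The two structural facts that make this work are already recorded in the model: the one step forecast $m_{i,t}(s_i)$ is $\mathcal{F}_t$-measurable for every $s_i\in S_i$, and, since $\mathbb{P}=\mathbb{P}^{\theta^\ast}=\ell(\cdot\vert\,\theta^\ast)^{\mathbb{N}}$ is a product measure, the next signal $\omega_{i,t+1}$ is independent of $\mathcal{F}_t$ with conditional law equal to the true marginal $\ell_i(\cdot\vert\,\theta^\ast)$. First I would use these to pull the known denominator out of the expectation and average the numerator against the conditional law of $\omega_{i,t+1}$, obtaining the finite-sum identity
\[
\mathbb{E}\Bigl(\frac{\ell_i(\omega_{i,t+1}\vert\,\theta)}{m_{i,t}(\omega_{i,t+1})}\,\Big\vert\,\mathcal{F}_t\Bigr)=\sum_{s_i\in S_i}\ell_i(s_i\vert\,\theta^\ast)\,\frac{\ell_i(s_i\vert\,\theta)}{m_{i,t}(s_i)},
\]
which holds with $\mathbb{P}$-probability one for each $\theta\in\Theta$.

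Next I would invoke Theorem \ref{m_to_l} in its pointwise form: with probability one, $m_{i,t}(s_i)\to\ell_i(s_i\vert\,\theta^\ast)$ for every $s_i\in S_i$, and by the standing convention $\ell_i(s_i\vert\,\theta^\ast)>0$ on all of $S_i$. Because $S_i$ is finite and each limiting denominator is strictly positive, on the full-measure event there is a path-dependent time beyond which every $m_{i,t}(s_i)$ is bounded away from zero; on this event the right-hand side above is a continuous function of the finite vector $\bigl(m_{i,t}(s_i)\bigr)_{s_i\in S_i}$ evaluated near its limit, so I may pass to the limit term by term:
\[
\sum_{s_i\in S_i}\ell_i(s_i\vert\,\theta^\ast)\,\frac{\ell_i(s_i\vert\,\theta)}{m_{i,t}(s_i)}\ \longrightarrow\ \sum_{s_i\in S_i}\ell_i(s_i\vert\,\theta)=1 .
\]
The final equality uses only that $\ell_i(\cdot\vert\,\theta)$ is a probability distribution on the alphabet $S_i$, and it is exactly this cancellation of the factors $\ell_i(s_i\vert\,\theta^\ast)$ against the denominators that makes the limit independent of $\theta$ and equal to one.

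I expect the difficulties here to be bookkeeping rather than conceptual. The first point to get right is the evaluation of the conditional expectation, which rests entirely on the $\mathcal{F}_t$-measurability of $m_{i,t}$ and on the temporal independence built into the product measure; once those are invoked the formula is immediate. The second, and the only genuinely delicate, point is controlling the denominators: one must not assume a uniform lower bound on $m_{i,t}(s_i)$ at finite times (indeed the forecasts can be small early on), but instead rely on convergence to a strictly positive limit together with the finiteness of $S_i$, which upgrades pointwise convergence to convergence of the finite sum by ordinary continuity. For this reason no appeal to the bounded convergence statement of Theorem \ref{dom_conv} is needed for this lemma; it will be the more substantial inputs to Theorem \ref{m_to_l_k}, not to Lemma \ref{E_lm}, that require that machinery.
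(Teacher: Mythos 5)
Your proposal is correct and takes essentially the same route as the paper: both evaluate the conditional expectation, via $\mathcal{F}_t$-measurability of $m_{i,t}$ and temporal independence of the signals under the product measure, as the finite sum $\sum_{s_i\in S_i}\ell_i(s_i\vert\,\theta^\ast)\,\ell_i(s_i\vert\,\theta)/m_{i,t}(s_i)$, and then conclude from Theorem \ref{m_to_l} without any appeal to Theorem \ref{dom_conv}. The only cosmetic difference is in the limit passage: the paper sandwiches the sum between $\min_{s_i}\ell_i(s_i\vert\,\theta^\ast)/m_{i,t}(s_i)$ and $\max_{s_i}\ell_i(s_i\vert\,\theta^\ast)/m_{i,t}(s_i)$, both of which tend to one, whereas you take the limit term by term using finiteness of $S_i$ and the standing positivity $\ell_i(s_i\vert\,\theta^\ast)>0$; the two justifications are equivalent and rest on the same facts.
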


\begin{proof}
\begin{IEEEeqnarray*}{rCl}
\mathbb{E}(\frac{\ell_i(\omega_{i,t+1}\vert\,\theta)}{m_{i,t}(\omega_{i,t+1})}\vert\,\mathcal{F}_t) & = & \sum_{s_i\in S_i}\frac{\ell_i(s_i\vert\,\theta)}{m_{i,t}(s_i)}\ell_i(s_i\vert\,\theta^\ast)\\ & \leq &\max_{s_i\in S_i}\frac{\ell_i(s_i\vert\,\theta^\ast)}{m_{i,t}(s_i)}\sum_{s_i\in S_i}\ell_i(s_i\vert\,\theta)\\& = &\max_{s_i\in S_i}\frac{\ell_i(s_i\vert\,\theta^\ast)}{m_{i,t}(s_i)}.
\end{IEEEeqnarray*}
By a similar argument
\begin{equation*}
\mathbb{E}(\frac{\ell_i(\omega_{i,t+1}\vert\,\theta)}{m_{i,t}(\omega_{i,t+1})}\vert\,\mathcal{F}_t)\geq\min_{s_i\in S_i}\frac{\ell_i(s_i\vert\,\theta^\ast)}{m_{i,t}(s_i)}.
\end{equation*}
The conditional expectation is sandwiched between two quantities, both of which go to one $\mathbb{P}$-almost surely by Theorem \ref{m_to_l}, and so does it.
\end{proof}

\begin{lemma}\label{m_ij} If Assumption \ref{asmp1} holds then
\begin{equation*}
\mathbb{E}^*(\mu_{t+1}(\theta)\vert\,\mathcal{F}_t)\overset{a.a.s.}{=}A \mu_{t}.
\end{equation*}
\end{lemma}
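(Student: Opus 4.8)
The plan is to establish the identity one coordinate at a time and then reassemble the vector, the substance being a single conditional-expectation computation plus an appeal to Lemma~\ref{E_lm}. Fix a state $\theta\in\Theta$ and an agent $i$, and start from the update rule~(\ref{eq:dynamic}). The weights $a_{ij}$ are deterministic and every neighbor belief $\mu_{j,t}(\theta)$ (including $\mu_{i,t}(\theta)$) is $\mathcal{F}_t$-measurable, so the only factor in~(\ref{eq:dynamic}) that is not $\mathcal{F}_t$-measurable is the likelihood ratio $\ell_i(\omega_{i,t+1}\vert\,\theta)/m_{i,t}(\omega_{i,t+1})$. Pulling the measurable terms out of the conditional expectation gives
\begin{equation*}
\mathbb{E}\bigl(\mu_{i,t+1}(\theta)\,\vert\,\mathcal{F}_t\bigr)=a_{ii}\,\mu_{i,t}(\theta)\,\mathbb{E}\Bigl(\frac{\ell_i(\omega_{i,t+1}\vert\,\theta)}{m_{i,t}(\omega_{i,t+1})}\,\Big\vert\,\mathcal{F}_t\Bigr)+\sum_{j\in\mathcal{N}_i}a_{ij}\,\mu_{j,t}(\theta).
\end{equation*}

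Next I would evaluate the inner conditional expectation, call it $g_{i,t}$. Because signals are independent across time, $\omega_{i,t+1}$ is independent of $\mathcal{F}_t$ and, under $\mathbb{P}=\mathbb{P}^{\theta^\ast}$, has marginal law $\ell_i(\cdot\vert\,\theta^\ast)$; since $m_{i,t}(\cdot)$ is $\mathcal{F}_t$-measurable, $g_{i,t}$ is exactly the quantity $\sum_{s_i\in S_i}\ell_i(s_i\vert\,\theta)\,\ell_i(s_i\vert\,\theta^\ast)/m_{i,t}(s_i)$ that opens the proof of Lemma~\ref{E_lm}. That lemma then gives $g_{i,t}\to1$ $\mathbb{P}$-almost surely, i.e. $g_{i,t}\overset{a.a.s.}{=}1$ (both sequences converge to the constant $1$).

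It remains to substitute $g_{i,t}\overset{a.a.s.}{=}1$ into the display. Using that $\overset{a.a.s.}{=}$ is preserved under multiplication and addition, the term $a_{ii}\,\mu_{i,t}(\theta)\,g_{i,t}$ becomes asymptotically $\mathbb{P}$-almost surely equal to $a_{ii}\,\mu_{i,t}(\theta)$, while the common sum $\sum_{j\in\mathcal{N}_i}a_{ij}\,\mu_{j,t}(\theta)$ carries over unchanged; hence the $i$th coordinate of $\mathbb{E}(\mu_{t+1}(\theta)\vert\,\mathcal{F}_t)$ is asymptotically $\mathbb{P}$-almost surely equal to $a_{ii}\,\mu_{i,t}(\theta)+\sum_{j\in\mathcal{N}_i}a_{ij}\,\mu_{j,t}(\theta)=(A\mu_t)_i$. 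Quantitatively the gap between the two sides is $a_{ii}\,\mu_{i,t}(\theta)\,(g_{i,t}-1)$, the product of a factor in $[0,1]$ and a factor vanishing in the limit. Collecting the $n$ coordinates yields the asserted vector identity $\mathbb{E}(\mu_{t+1}(\theta)\vert\,\mathcal{F}_t)\overset{a.a.s.}{=}A\mu_t$ for the fixed $\theta$.

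The one point that needs genuine care is the passage from ``the gap tends to $0$'' to the stated relation $\overset{a.a.s.}{=}$: by its definition the latter compares the two sequences at \emph{independent} times $t_1,t_2$, so it implicitly demands that each side converge, not merely that their difference vanish. The multiplication rule quoted after the definition is precisely the device that discharges this, since it replaces only the single factor $g_{i,t}$ by its limit while leaving the belief terms identical on both sides; I would nonetheless flag that applying that rule to the factor $a_{ii}\,\mu_{i,t}(\theta)$ tacitly treats the belief coordinates as asymptotically equal to themselves, i.e. as a.s.\ convergent. Securing (or invoking) the almost sure convergence of the belief sequences is therefore the substantive input behind the routine computation, and is where I expect the real work to sit.
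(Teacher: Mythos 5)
Your proof is essentially the paper's: condition equation~(\ref{eq:dynamic}) on $\mathcal{F}_t$, pull out the $\mathcal{F}_t$-measurable belief terms, apply Lemma~\ref{E_lm} to the remaining likelihood-ratio factor, and stack the $n$ coordinates into the vector identity. The subtlety you flag at the end is real --- the two-time definition of $\overset{a.a.s.}{=}$ formally requires each side to be almost surely Cauchy, hence convergent, which neither you nor the paper establishes for $\mu_{i,t}(\theta)$ at this stage --- but the paper's proof glosses over this in exactly the same way (its ``taking the limit'' step only shows the equal-time difference vanishes almost surely), so your attempt matches the published argument, caveat included.
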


\begin{proof}
We take expectations of both sides of equation (\ref{eq:dynamic}) conditioned on $\mathcal{F}_t$. Since $\mu_{j,t}(\theta)$ is $\mathcal{F}_t$ measurable for all $1\leq j\leq n$,
\begin{equation*}
\mathbb{E}(\mu_{i,t+1}(\theta)\vert\,\mathcal{F}_t)  =   a_{ii}\mu_{i,t}(\theta)\mathbb{E}(\frac{\ell_i(\omega_{i,t+1}\vert\,\theta)}{m_{i,t}(\omega_{i,t+1})}\vert\,\mathcal{F}_t) + \sum_{j\in\mathcal{N}_i}a_{ij}\mu_{j,t}(\theta).
\end{equation*}
Taking the limit of the above equation as $t\to\infty$ and using Lemma \ref{E_lm}, 
\begin{equation*}
\mathbb{E}(\mu_{i,t+1}(\theta)\vert\,\mathcal{F}_t)\overset{a.a.s.}{=}a_{ii}\mu_{i,t}(\theta)+\sum_{j\in\mathcal{N}_i}a_{ij}\mu_{j,t}(\theta),
\end{equation*}
which in vector form can be written as 
\begin{equation*}
\mathbb{E}(\mu_{t+1}(\theta)\vert\,\mathcal{F}_t)\overset{a.a.s.}{=}A \mu_{t}.
\end{equation*}
\end{proof}

The next theorem shows that not only the agents eventually forecast the next step correctly, as shown in Theorem \ref{m_to_l}, but also they do so for the next $k$ steps for any \emph{finite} $k$.
\begin{theorem}\label{m_to_l_k}
If Assumption \ref{asmp1} holds, then
\begin{equation*}
\quad\displaystyle m_{i,t}(s_{i,1},\dots,s_{i,k})\to \ell_i(s_{i,1},\dots,s_{i,k}\vert\,\theta^\ast)\quad\text{as}\quad t\to\infty,
\end{equation*}
with $\mathbb{P}$-probability one for all $s_{i,1},s_{i,2},\dots,s_{i,k}\in S_i$.
\end{theorem}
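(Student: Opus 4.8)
The plan is to induct on the horizon $k$, with the base case $k=1$ being exactly Theorem \ref{m_to_l}. The key step is an \emph{exact} recursion that writes a $(k-1)$-step forecast one period later in terms of $k$-step forecasts now. Substituting the update rule (\ref{eq:dynamic}) into the definition of $m_{i,t+1}(s_{i,2},\dots,s_{i,k})$ and using the product form $\ell_i(\omega_{i,t+1}\vert\,\theta)\,\ell_i(s_{i,2},\dots,s_{i,k}\vert\,\theta)=\ell_i(\omega_{i,t+1},s_{i,2},\dots,s_{i,k}\vert\,\theta)$ yields, for every $t$,
\begin{equation*}
m_{i,t+1}(s_{i,2},\dots,s_{i,k})=a_{ii}\frac{m_{i,t}(\omega_{i,t+1},s_{i,2},\dots,s_{i,k})}{m_{i,t}(\omega_{i,t+1})}+\sum_{j\in\mathcal{N}_i}a_{ij}\sum_{\theta\in\Theta}\ell_i(s_{i,2},\dots,s_{i,k}\vert\,\theta)\mu_{j,t}(\theta).
\end{equation*}
The first term on the right is a genuine $k$-step forecast at time $t$ whose leading argument is the realized signal $\omega_{i,t+1}$; the remaining terms are agent $i$'s own likelihood evaluated against the neighbors' beliefs.

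Next I would collapse the neighbor terms using consensus. Because the graph is strongly connected and all self-reliances are positive, Lemma \ref{m_ij} shows the beliefs evolve asymptotically like the averaging dynamics $A\mu_t$. Letting $v$ be the left Perron eigenvector, $v^\top A=v^\top$, the scalar $v^\top\mu_t(\theta)$ is an asymptotic martingale bounded in $[0,1]$ and hence converges almost surely; combined with the contraction of $A$ on the orthogonal complement of $\mathbf{1}$, this should give consensus, $\mu_{j,t}(\theta)\overset{a.a.s.}{=}\mu_{i,t}(\theta)$ for all $i,j,\theta$, and in fact almost-sure convergence of each $\mu_{i,t}(\theta)$. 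Consensus turns $\sum_{\theta}\ell_i(s_{i,2},\dots,s_{i,k}\vert\,\theta)\mu_{j,t}(\theta)$ into $m_{i,t}(s_{i,2},\dots,s_{i,k})$ asymptotically; then, using $\sum_{j\in\mathcal{N}_i}a_{ij}=1-a_{ii}$ and the inductive hypothesis at horizon $k-1$ on both the left-hand side $m_{i,t+1}(s_{i,2},\dots,s_{i,k})$ and on $m_{i,t}(s_{i,2},\dots,s_{i,k})$, the recursion reduces to $a_{ii}$ times the leading term equalling $a_{ii}\,\ell_i(s_{i,2},\dots,s_{i,k}\vert\,\theta^\ast)$. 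Since $a_{ii}>0$ I can cancel and obtain
\begin{equation*}
\frac{m_{i,t}(\omega_{i,t+1},s_{i,2},\dots,s_{i,k})}{m_{i,t}(\omega_{i,t+1})}\overset{a.a.s.}{=}\ell_i(s_{i,2},\dots,s_{i,k}\vert\,\theta^\ast).
\end{equation*}

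Finally I would promote this, which only controls the forecast at the \emph{realized} leading signal, to all leading signals. Since $\ell_i(s_i\vert\,\theta^\ast)>0$ for every $s_i\in S_i$, applying Theorem \ref{thm_B-C} to the events $\{\omega_{i,t+1}=s_i\}$ gives $\sum_t\mathbb{P}(\omega_{i,t+1}=s_i\vert\,\mathcal{F}_t)=\sum_t\ell_i(s_i\vert\,\theta^\ast)=\infty$, so each signal is realized infinitely often; along the corresponding subsequence the displayed ratio, multiplied by $m_{i,t}(s_i)\to\ell_i(s_i\vert\,\theta^\ast)$ (Theorem \ref{m_to_l}), yields $m_{i,t}(s_i,s_{i,2},\dots,s_{i,k})\to\ell_i(s_i,s_{i,2},\dots,s_{i,k}\vert\,\theta^\ast)$. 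Because $m_{i,t}(s_i,s_{i,2},\dots,s_{i,k})$ is a fixed linear functional of $\mu_{i,t}$, the almost-sure convergence of $\mu_{i,t}$ established above upgrades this subsequential limit to a full limit, closing the induction. I expect the main obstacle to be precisely the consensus/belief-convergence step: the recursion and the Borel--Cantelli argument are clean, but verifying that the martingale fluctuations introduced by the Bayesian factor wash out — so that the averaging dynamics of Lemma \ref{m_ij} really forces agreement and convergence of the beliefs — is where the real work lies, and it is also exactly what is needed to pass from convergence along the realized-signal subsequence to convergence along all of $t$.
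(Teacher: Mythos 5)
There is a genuine gap, and it is exactly the one you flag at the end. Your argument leans twice on almost-sure consensus/convergence of the beliefs $\mu_{i,t}(\theta)$ themselves: once to collapse the cross term $\sum_{\theta}\ell_i(s_{i,2},\dots,s_{i,k}\vert\,\theta)\mu_{j,t}(\theta)$ (which pairs agent $i$'s likelihood with neighbor $j$'s belief, so neither the horizon-$(k-1)$ induction hypothesis for $i$ nor for $j$ applies to it), and once to upgrade convergence along the realized-signal subsequence to a full limit. But Lemma \ref{m_ij} only gives $\mathbb{E}(\mu_{t+1}(\theta)\vert\,\mathcal{F}_t)\overset{a.a.s.}{=}A\mu_t(\theta)$, i.e.\ an asymptotically vanishing drift, and a bounded process whose drift tends to zero almost surely need not converge: the deterministic sequence $x_t=\sin(\log t)$ is bounded, its increments (hence its drift) tend to zero, yet it oscillates forever. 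Nothing in the model makes the perturbation terms summable, so neither the Perron-eigenvector ``asymptotic martingale'' step nor the contraction-on-the-complement-of-$\mathbf{1}$ step is justified; indeed the paper never establishes consensus or convergence of the beliefs anywhere in this argument (its only belief-limit statement, Theorem \ref{thm_main}, concerns $\theta\notin\bar{\Theta}$ and is proved \emph{using} this theorem).

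Your exact recursion is correct and matches the paper's algebra, but the paper closes the induction without consensus by using Lemma \ref{m_ij} in the opposite direction: it substitutes $\sum_{j\in\mathcal{N}_i}a_{ij}\mu_{j,t}(\theta)\overset{a.a.s.}{=}\mathbb{E}(\mu_{i,t+1}(\theta)\vert\,\mathcal{F}_t)-a_{ii}\mu_{i,t}(\theta)$, so that after Fubini the troublesome cross term becomes $\mathbb{E}(m_{i,t+1}(s_2,\dots,s_k)\vert\,\mathcal{F}_t)$, to which the induction hypothesis plus the conditional bounded convergence theorem (Theorem \ref{dom_conv}) applies, yielding your displayed relation $m_{i,t}(\omega_{i,t+1},s_2,\dots,s_k)\overset{a.a.s.}{=}m_{i,t}(\omega_{i,t+1})\,m_{i,t}(s_2,\dots,s_k)$ with only agent $i$'s own quantities. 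For the final step the paper also avoids any subsequence upgrade: it applies Theorem \ref{thm_B-C} in the converse direction to yours, deducing from $\mathbb{P}(A_t\text{ i.o.})=0$ that $\sum_t\mathbb{P}(A_t\vert\,\mathcal{F}_{t-1})<\infty$ almost surely, then writes $\mathbb{P}(A_t\vert\,\mathcal{F}_{t-1})=\sum_{s_1}1_{B_t(s_1)}\ell_i(s_1\vert\,\theta^\ast)$ using independence of $\omega_{i,t}$ from $\mathcal{F}_{t-1}$, and since $\ell_i(s_1\vert\,\theta^\ast)>0$ concludes $\mathbb{P}(B_t(s_1)\text{ i.o.})=0$ for every fixed leading signal, which is full convergence outright. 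So your decomposition and your diagnosis of the realized-signal obstacle are sound, but the two steps you route through consensus are the decisive ones, and as sketched they do not go through.
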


\begin{proof}
To simplify notation we drop the subscript $i$ from $s_{i,1},s_{i,2},\dots,s_{i,k}$ whenever there is no risk of confusion. We use induction on $k$. For $k=1$ the result is proved in Theorem \ref{m_to_l}. Multiplying both sides of equation (\ref{eq:dynamic}) by $m_{i,t}(\omega_{i,t+1})\ell_i(s_2,\dots,s_k\vert\,\theta)$ and summing over $\theta\in\Theta$,
\begin{IEEEeqnarray*}{rCl}
m_{i,t}(\omega_{i,t+1})\sum_{\theta\in\Theta}\ell_i(s_2,\dots,s_k\vert\,\theta)\mu_{i,t+1}(\theta) & = & a_{ii}\sum_{\theta\in\Theta}\ell_i(\omega_{i,t+1},s_2,\dots,s_k\vert\,\theta)\mu_{i,t}(\theta)\\ && +\:m_{i,t}(\omega_{i,t+1})\sum_{\theta\in\Theta}\ell_i(s_2,\dots,s_k\vert\,\theta)\sum_{j\in\mathcal{N}_i}a_{ij}\mu_{j,t}(\theta).
\end{IEEEeqnarray*}
Thus,
\begin{IEEEeqnarray*}{rCl}
m_{i,t}(\omega_{i,t+1})m_{i,t+1}(s_2,\dots,s_k) & = & a_{ii}m_{i,t}(\omega_{i,t+1},s_2,\dots,s_k)\\ && +\:m_{i,t}(\omega_{i,t+1})\sum_{\theta\in\Theta} \ell_i(s_2,\dots,s_k\vert\,\theta)\sum_{j\in\mathcal{N}_i}a_{ij}\mu_{j,t}(\theta).
\end{IEEEeqnarray*}
By Lemma \ref{m_ij},
\begin{IEEEeqnarray*}{rCl}
m_{i,t}(\omega_{i,t+1})m_{i,t+1}(s_2,\dots,s_k) & \overset{a.a.s.}{=} & a_{ii}m_{i,t}(\omega_{i,t+1},s_2,\dots,s_k)\\ && +\:m_{i,t}(\omega_{i,t+1})\sum_{\theta\in\Theta}\ell_i(s_2,\dots,s_k\vert\,\theta)\mathbb{E}(\mu_{i,t+1}(\theta)\rvert\mathcal{F}_t)\\&&-\:a_{ii}m_{i,t}(\omega_{i,t+1})\sum_{\theta\in\Theta} \mu_{i,t}(\theta)\ell_i(s_2,\dots,s_k\vert\,\theta).
\end{IEEEeqnarray*}
Since all the terms are positive and $\ell_i(s_2,\dots,s_k\vert\,\theta)$ is a constant, using Fubini's theorem \cite{Durrett10} we can change the order of sum and expectation to get
\begin{IEEEeqnarray*}{rCl}
m_{i,t}(\omega_{i,t+1})m_{i,t+1}(s_2,\dots,s_k) & \overset{a.a.s.}{=} & a_{ii}m_{i,t}(\omega_{i,t+1},s_2,\dots,s_k)\\&&+\:m_{i,t}(\omega_{i,t+1})\mathbb{E}(m_{i,t+1}(s_2,\dots,s_k)\vert\,\mathcal{F}_t)\\&&-\:a_{ii}m_{i,t}(\omega_{i,t+1})m_{i,t}(s_2,\dots,s_k).
\end{IEEEeqnarray*}
By induction hypothesis $m_{i,t+1}(s_2,\dots,s_k)$ converges $\mathbb{P}$-almost surely to $\ell_i(s_2,\dots,s_k\vert\,\theta^\ast)$. Since $m_{i,t+1}$ is a probability measure it is bounded for all $t$. Also note that $\mathcal{F}_t\uparrow\mathcal{F}$. Hence, we can use Theorem \ref{dom_conv} to conclude that $\mathbb{E}(m_{i,t+1}(s_2,\dots,s_k)\vert\,\mathcal{F}_t)$ converges $\mathbb{P}$-almost surely to $\mathbb{E}(\ell_i(s_2,\dots,s_k\vert\,\theta^\ast)\vert\,\mathcal{F})$ which is just $\ell_i(s_2,\dots,s_k\vert\,\theta^\ast)$. Therefore, 
\begin{IEEEeqnarray*}{rCl}
m_{i,t}(\omega_{i,t+1})\ell_i(s_2,\dots,s_k\vert\,\theta^\ast) & \overset{a.a.s.}{=} & a_{ii}m_{i,t}(\omega_{i,t+1},s_2,\dots,s_k)\\ && +\:m_{i,t}(\omega_{i,t+1})\ell_i(s_2,\dots,s_k\vert\,\theta^\ast)\\ && -\:a_{ii}m_{i,t}(\omega_{i,t+1})\ell_i(s_2,\dots,s_k\vert\,\theta^\ast).
\end{IEEEeqnarray*}
We can now solve for $m_{i,t}(\omega_{i,t+1},s_2,\dots,s_k)$ to get 
\begin{equation*}
m_{i,t}(\omega_{i,t+1},s_2,\dots,s_k)\overset{a.a.s.}{=}m_{i,t}(\omega_{i,t+1})m_{i,t}(s_2,\dots,s_k).
\end{equation*}
By the induction hypothesis, $m_{i,t}(s_2,\dots,s_k)$ converges on a set of $\mathbb{P}$-probability one to $\ell_i(s_2,$$\dots,s_k\vert\,\theta^\ast)$. Moreover, $m_{i,t}(\omega_{i,t+1})$ is asymptotically $\mathbb{P}$-almost surely equal to $\ell_i(\omega_{i,t+1}\vert\,\theta^\ast)$. Therefore, $m_{i,t}(\omega_{i,t+1},s_2,$$\dots,s_k)$ is asymptotically $\mathbb{P}$-almost surely equal to $\ell_i(\omega_{i,t+1},s_2,\dots,s_k\vert\,\theta^\ast)$. To get the result for any arbitrary $s_1,s_2,\dots,s_k\in S_i$, we need to use Theorem \ref{thm_B-C}. For any $\epsilon>0$, let 
\begin{equation*}
A_t=\Bigg\{\omega:\Big\lvert\frac{m_{i,t-1}(\omega_{i,t},s_2,\dots,s_k)}{\ell_i(\omega_{i,t},s_2,\dots,s_k\vert\,\theta^\ast)}-1\Big\rvert>\epsilon\Bigg\},
\end{equation*}
and for any $s_1\in S_i$, let
\begin{equation*}
B_t(s_1)=\Bigg\{\omega:\Big\lvert\frac{m_{i,t-1}(s_1,s_2,\dots,s_k)}{\ell_i(s_1,s_2,\dots,s_k\vert\,\theta^\ast)}-1\Big\rvert>\epsilon\Bigg\}.
\end{equation*}
Clearly $A_t$ is measurable with respect to $\mathcal{F}_t$ for all $t\geq 1$. Since asymptotically $\mathbb{P}$-almost surely $m_t(\omega_{i,t+1},s_2,\dots,s_k)$ is equal to $\ell_i(\omega_{i,t+1},s_2,\dots,s_k\vert\,\theta^\ast)$, $\mathbb{P}(A_t \text{ i.o.})=0$ for any $\epsilon>0$, where i.o. is the abbreviation for infinitely often. Using Theorem \ref{thm_B-C} for the $A_n$ defined above,
\begin{equation}
\label{eq:sum_P}
\sum_{t=1}^{\infty}\mathbb{P}(A_t\vert\mathcal{F}_{t-1})<\infty,
\end{equation}
with $\mathbb{P}$-probability one. We can write the summand in equation (\ref{eq:sum_P}) as
\begin{equation}
\label{eq:indicator}
\mathbb{P}(A_t\vert\mathcal{F}_{t-1})=\mathbb{E}\Big(\sum_{s_1\in S} 1_{B_t(s_1)} 1_{\{\omega_{i,t}=s_1\}}\vert\mathcal{F}_{t-1}\Big),
\end{equation}
where for any $A\in\mathcal{F}$, $1_A$ is a random variable defined as
\begin{equation*}
1_A(\omega)=\begin{dcases}
   1 & \text{if } \omega\in A, \\
   0       & \text{if } \omega\notin A.
  \end{dcases}
\end{equation*}
$1_{B_t(s_1)}\in\mathcal{F}_{t-1}$ for any $s_1\in S$, and $\omega_{i,t}$ is independent of $\mathcal{F}_{t-1}$. Since all the random variables in equation (\ref{eq:indicator}) are positive, we can switch the order of sum and expected value to get
\begin{IEEEeqnarray*}{rCl}
\mathbb{P}(A_t\vert\mathcal{F}_{t-1}) & = & \sum_{s_1\in S}1_{B_t(s_1)}\mathbb{E} 1_{\{\omega_{i,t}=s_1\}}\\ & = &\sum_{s_1\in S}1_{B_t(s_1)}\ell_i(s_1\,\vert\theta^\ast).
\end{IEEEeqnarray*}
If we substitute the above equation for $\mathbb{P}(A_t\vert\mathcal{F}_{t-1})$ in equation (\ref{eq:sum_P}) and use the fact that all the terms are positive to switch the order of the sums, we get that
\begin{equation*}
\sum_{s_1\in S}\ell_i(s_1\,\vert\theta^\ast)\sum_{t=1}^{\infty}1_{B_t(s_1)}<\infty,
\end{equation*}
with $\mathbb{P}$-probability. Therefore, $\mathbb{P}(B_t(s_1) \text{ i.o.})=0$ for any $s_1\in S_i$, and $m_{i,t}(s_1,s_2,\dots,s_k)$ converges $\mathbb{P}$-almost surely to $\ell_i(s_1,s_2,\dots,s_k\vert\,\theta^\ast)$ for any $s_1,s_2,\dots,s_k\in S_i$.
\end{proof}

This theorem shows that, eventually, an agent's forecast at time $t$ of a time $t+k$ event is not different than his forecast at time $t+k-1$ of the same event. That is, when agents already have accurate forecasts of immediate future, they cannot improve their forecasts of indefinite future by observing more signals.

The following theorem uses finiteness of $S_i$ to show that for any agent $i$ there exist a long enough signal sequence that is more probable under $\theta^\ast$ than any state $\theta\notin\bar{\Theta}_i$.
\begin{theorem}\label{k_step_st}
For any agent $i$, there exists a finite number $\hat{k}_i$ and signals $\hat{s}_{i,1},\hat{s}_{i,2},\dots,\hat{s}_{i,\hat{k}_i}$ such that
\begin{equation}
\label{eq:k_step_st} \displaystyle\frac{\ell_i(\hat{s}_{i,1},\hat{s}_{i,2},\dots,\hat{s}_{i,\hat{k}_i}|\,\theta)}{\ell_i(\hat{s}_{i,1},\hat{s}_{i,2},\dots,\hat{s}_{i,\hat{k}_i}|\,\theta^\ast)}\leq\delta_i<1\qquad\forall\theta\notin\bar{\Theta}_i.
\end{equation}
\end{theorem}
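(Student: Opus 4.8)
The plan is to pass to logarithms and exploit that a $k$-step likelihood factorizes as a product over the $k$ time steps, so that
\begin{equation*}
\log\frac{\ell_i(s_1,\dots,s_k\vert\,\theta)}{\ell_i(s_1,\dots,s_k\vert\,\theta^\ast)}=\sum_{s\in S_i} n_s\,\log\frac{\ell_i(s\vert\,\theta)}{\ell_i(s\vert\,\theta^\ast)},
\end{equation*}
where $n_s$ is the number of times the symbol $s$ occurs in $(s_1,\dots,s_k)$. Thus it suffices to choose the multiplicities $n_s$ (equivalently, the empirical frequencies $f_s=n_s/k$) so that the right-hand side is at most $\log\delta_i<0$ for every $\theta\notin\bar{\Theta}_i$ at once. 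The natural choice is to take $f$ to be a close rational approximation of the true marginal $\ell_i(\cdot\vert\,\theta^\ast)$ and then to lengthen the sequence by repetition.

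First I would record the one-shot information inequality: for any $\theta$, Jensen's inequality applied to the concave function $\log$ gives
\begin{equation*}
\sum_{s\in S_i}\ell_i(s\vert\,\theta^\ast)\log\frac{\ell_i(s\vert\,\theta)}{\ell_i(s\vert\,\theta^\ast)}\;\leq\;\log\sum_{s\in S_i}\ell_i(s\vert\,\theta)=\log 1=0,
\end{equation*}
and the equality case of Jensen, together with the fact that both marginals are probability vectors summing to one, forces the ratio $\ell_i(s\vert\,\theta)/\ell_i(s\vert\,\theta^\ast)$ to be identically $1$; hence equality holds exactly when $\theta\in\bar{\Theta}_i$. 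So for every $\theta\notin\bar{\Theta}_i$ the left-hand side is strictly negative, with the convention that it equals $-\infty$ when some $\ell_i(s\vert\,\theta)=0$, since $\ell_i(s\vert\,\theta^\ast)>0$.

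Next I would split the finite set $\Theta\setminus\bar{\Theta}_i$ into the states under which every symbol still has positive probability and the states under which some symbol has zero probability. I will pick rational frequencies $f_s>0$ for all $s\in S_i$ close to $\ell_i(\cdot\vert\,\theta^\ast)$; writing $f_s=q_s/Q$ and taking the length-$Q$ block consisting of $q_s$ copies of each $s$, repeated $N$ times, yields a candidate sequence of length $\hat{k}_i=NQ$. For a state of the first type the map $f\mapsto\sum_{s\in S_i} f_s\log(\ell_i(s\vert\,\theta)/\ell_i(s\vert\,\theta^\ast))$ is a finite affine (hence continuous) function of $f$ that is strictly negative at $f=\ell_i(\cdot\vert\,\theta^\ast)$; by finiteness of the state set I can fix a single neighborhood of $\ell_i(\cdot\vert\,\theta^\ast)$ on which it is bounded above by some $-c<0$ simultaneously for all such $\theta$, and place $f$ there. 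For a state of the second type the block already contains the zero-probability symbol, so the whole product likelihood under $\theta$ vanishes and the ratio is $0$.

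Finally, multiplying through by the length, the log-ratio for states of the first type is $NQ\sum_{s\in S_i} f_s\log(\cdot)\leq -NQc\to-\infty$ as $N\to\infty$, while for states of the second type it is $-\infty$; choosing $N$ large enough makes every ratio at most $e^{-NQc}$, so I may take $\delta_i=e^{-NQc}<1$. I expect the main obstacle to be the bookkeeping needed to replace the averaged (expected log-likelihood-ratio) statement by an honest finite deterministic sequence: one must approximate the possibly irrational target frequencies by rationals while keeping the strict negativity uniform over the finitely many competing states, and separately handle the states assigning zero probability to some symbol, where the logarithm blows up but the likelihood ratio conveniently collapses to $0$.
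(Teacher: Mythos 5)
Your proof is correct, and it arrives at the same construction as the paper --- a sequence whose empirical frequencies match (or approximate) $\ell_i(\cdot\vert\,\theta^\ast)$ --- but verifies it by a different key lemma. The paper works multiplicatively: in the rational case it takes $\hat{k}_i$ to be the least common denominator of $\{\ell_i(s\vert\,\theta^\ast)\}_{s\in S_i}$, builds the sequence with \emph{exact} frequency matching, and shows via a Lagrange-multiplier analysis of the concave program $\max \prod_j (p^j)^{k^j}$ subject to $\sum_j p^j=1$ that $\ell_i(\cdot\vert\,\theta^\ast)$ is the \emph{unique} maximizer of the sequence's probability over all measures on $S_i$, so every $\theta\notin\bar{\Theta}_i$ yields a ratio strictly below one; the irrational case is then dispatched in one line by ``continuity and density of the rationals.'' You instead pass to logarithms and use the information inequality (Jensen applied to $\log$, i.e., strict positivity of the relevant KL divergence off $\bar{\Theta}_i$), which replaces the optimization computation, and you build the rational approximation into the construction from the outset, fixing a uniform margin $-c<0$ over the finitely many competing states. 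That buys you a rigorous treatment of precisely the step the paper leaves sketchy: the density appeal requires the strict inequality to survive perturbation of the frequencies uniformly in $\theta$, which is exactly what your common neighborhood provides; and your separate handling of states with $\ell_i(s\vert\,\theta)=0$ for some $s\in S_i$ cleanly sidesteps $\log 0$ (the paper's product formulation absorbs this silently, since such a $\theta$ assigns the sequence probability zero). Two cosmetic points: taking $N$ large is unnecessary --- any $N\geq 1$ already gives $\delta_i=e^{-NQc}<1$, since the theorem only asks for \emph{some} $\delta_i<1$; and your equality-case remark presumes $\sum_{s\in S_i}\ell_i(s\vert\,\theta)=1$, which can fail if $\theta$ places mass on signals outside $S_i$ (recall the paper trims $S_i$ to the support of $\ell_i(\cdot\vert\,\theta^\ast)$), but in that event Jensen's bound is $\log\sum_{s\in S_i}\ell_i(s\vert\,\theta)<0$ outright, so your strict-negativity conclusion only becomes stronger.
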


\begin{proof}
First assume that $\ell_i(s_i\vert\,\theta^\ast)$ is a rational number for all $s_i\in S_i$. In this case we will prove that we can take $\hat{k}_i$ to be the least common denominator (LCD) of $\{\ell_i(s_i\vert\,\theta^\ast)\}_{s_i\in S_i}$ and $(\hat{s}_{i,1},\hat{s}_{i,2},\dots,\hat{s}_{i,\hat{k}_i})$ to be a sequence of signals in which the number of occurrences of each $s_i\in S_i$ is exactly equal to the numerator of the fractional representation of $\ell_i(s_i\vert\,\theta^\ast)$, when the denominator is equal to $\hat{k}_i$. In other words, we pick the signal sequence in which the frequency of each signal is equal to its probability under $\theta^\ast$. 

Let $\hat{k}_i=\text{LCD}\big(\{\ell_i(s_i\vert\,\theta^\ast)\}_{s_i\in S_i}\big)$ and $k_i^j=\ell(s_i^j\vert\,\theta^\ast)\hat{k}_i$ for $1\leq j\leq M_i$. We prove that $\ell_i(\cdot\vert\,\theta^\ast)$ is the unique probability measure for which the probability of the signal sequence 
\begin{equation*}
(\underbrace{s_i^1,\dots,s_i^1}_{k_i^1\text{times}},\dots,\underbrace{s_i^{M_i},\dots,s_i^{M_i}}_{k_i^{M_i}\text{times}})
\end{equation*}
is maximized. As a result, for this sequence $\displaystyle \ell_i(\cdot|\,\theta)/\ell_i(\cdot|\,\theta^\ast)$ is strictly less than one for any $\theta\notin\bar{\Theta}_i$. 

Let $p_i^j=\mathbb{Q}(s_i^j)$ for $1\leq j\leq M_i$, where $\mathbb{Q}$ is some probability measure on $S_i$. To simplify notation we drop the subscript $i$ whenever there is no risk of confusion. We solve the following concave maximization problem:
\begin{IEEEeqnarray}{rCl}
\label{eq:opt}
\begin{aligned}
\max_{p^1,\dots,p^{M}}&\qquad {(p^1)}^{k^1} {(p^2)}^{k^2}\dots {(p^{M})}^{k^{M}}\\
\text{subject to}&\qquad p^1+p^2+\dots+p^{M}=1,
\end{aligned}
\end{IEEEeqnarray}
where by ${(p^j)}^{k^j}$ we mean $p^j$ to the power of $k^j$. We can use a Lagrange multiplier $\rho$ to incorporate the constraint into the cost function~\cite{Boyd04}. The resulting unconstrained optimization is still concave. Therefore, the optimal solutions can be found using the first order conditions. The resulting set of equalities are
\begin{IEEEeqnarray*}{rCl}
&\frac{k^j}{p^j}-\rho=0\qquad 1\leq j\leq M,\\
&p^1+p^2+\dots+p^m=1,
\end{IEEEeqnarray*}
which have the unique solution 
\begin{equation*}
p^j=\frac{k^j}{\sum_{j=1}^M k^j}\qquad 1\leq j\leq M.
\end{equation*}
By construction, this solution corresponds to the probability measure $\ell_i(\cdot\vert\,\theta^\ast)$. 

So far we have proved the theorem for the rational case. For the case that $\ell_i(s_i\vert\,\theta^\ast)$ is irrational for some $s_i\in S_i$, the proof follows from continuity of the objective of the optimization problem (\ref{eq:opt}) with respect to $(p^1,p^2,\dots,p^j)$, and the fact that rational numbers are dense in reals.
\end{proof}

The condition in (\ref{eq:k_step_st}) is the $k$-step counterpart of the condition in (\ref{eq:star}). We showed that even though a single signal revealing the true state might not exist, but there are signal sequences that will do so. We are now ready to prove the main result of this section:
\begin{theorem}
\label{thm_main}
If Assumption \ref{asmp1} holds, then
\begin{equation*}
\mu_{i,t}(\theta)\to0\quad\text{as}\quad t\to\infty\qquad\forall i\in\mathcal{N},\quad\forall{\theta}\notin\bar{\Theta},
\end{equation*}
with $\mathbb{P}$-probability one for all $i\in\mathcal{N}$.
\end{theorem}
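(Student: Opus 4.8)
The plan is to fix an arbitrary state $\theta\notin\bar\Theta$ and prove $\mu_{i,t}(\theta)\to0$ for every agent $i$ in two stages. Since $\bar\Theta=\bar\Theta_1\cap\cdots\cap\bar\Theta_n$, the hypothesis $\theta\notin\bar\Theta$ means that the set $D=\{i\in\mathcal N:\theta\notin\bar\Theta_i\}$ of agents who can \emph{individually} distinguish $\theta$ from $\theta^\ast$ is nonempty. First I would show that every agent in $D$ drives its own belief on $\theta$ to zero, using the correct $k$-step forecasts of Theorem \ref{m_to_l_k} together with the revealing sequence of Theorem \ref{k_step_st}. Then I would propagate this conclusion to the remaining agents in $D^c$ through the connectivity of the network.

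For the first stage, fix $i\in D$ and let $(\hat s_1,\dots,\hat s_K)$ be the revealing signal sequence supplied by Theorem \ref{k_step_st}, so that $\ell_i(\hat s_1,\dots,\hat s_K|\theta')/\ell_i(\hat s_1,\dots,\hat s_K|\theta^\ast)\le\delta_i<1$ for all $\theta'\notin\bar\Theta_i$. By Theorem \ref{m_to_l_k} the forecast $m_{i,t}(\hat s_1,\dots,\hat s_K)=\sum_{\theta'}\ell_i(\hat s_1,\dots,\hat s_K|\theta')\mu_{i,t}(\theta')$ converges to $\ell_i(\hat s_1,\dots,\hat s_K|\theta^\ast)$. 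Dividing by $\ell_i(\hat s_1,\dots,\hat s_K|\theta^\ast)>0$, using that this ratio equals one for every $\theta'\in\bar\Theta_i$, and that $\sum_{\theta'\in\bar\Theta_i}\mu_{i,t}(\theta')=1-\sum_{\theta'\notin\bar\Theta_i}\mu_{i,t}(\theta')$, the limit rearranges into $\sum_{\theta'\notin\bar\Theta_i}\big(1-\ell_i(\hat s_1,\dots,\hat s_K|\theta')/\ell_i(\hat s_1,\dots,\hat s_K|\theta^\ast)\big)\mu_{i,t}(\theta')\to0$. Every summand is nonnegative and each coefficient is at least $1-\delta_i>0$, so each $\mu_{i,t}(\theta')\to0$; in particular $\mu_{i,t}(\theta)\to0$ for all $i\in D$. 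This stage is the $k$-step analogue of the original single-signal argument and is where the paper's new ingredients do the work.

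For the second stage I would read the exact update (\ref{eq:dynamic}) for the fixed $\theta$ as a linear recursion $v_{t+1}=B_t v_t$ on the belief vector $v_t=\mu_t(\theta)$, where $B_t$ has off-diagonal entries $a_{ij}$ and diagonal entries $a_{ii}\,\ell_i(\omega_{i,t+1}|\theta)/m_{i,t}(\omega_{i,t+1})$. Splitting $v_t$ into its $D$ and $D^c$ blocks, the first stage gives $v^{D}_t\to0$, so the $D^c$ block obeys $v^{D^c}_{t+1}=C_t v^{D^c}_t+\eta_t$, where $\eta_t$ collects the bounded coupling from the $D$ block and hence tends to zero. The key point is that for $i\in D^c$ one has $\theta\in\bar\Theta_i$, whence $\ell_i(\cdot|\theta)=\ell_i(\cdot|\theta^\ast)$ and, by the weak merging of Theorem \ref{m_to_l} together with finiteness and positivity of $\ell_i(\cdot|\theta^\ast)$ on $S_i$, the diagonal factors satisfy $\ell_i(\omega_{i,t+1}|\theta)/m_{i,t}(\omega_{i,t+1})\to1$; thus $C_t\to A_{D^c D^c}$, the principal submatrix of $A$ indexed by $D^c$. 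Because strong connectivity and positive self-reliances make $A$ an irreducible stochastic matrix and $D\neq\emptyset$, Perron--Frobenius theory gives $\rho(A_{D^c D^c})<1$. Choosing a vector norm in which $\|A_{D^c D^c}\|<1$, the convergence $C_t\to A_{D^c D^c}$ makes $\|C_t\|$ eventually bounded by some $\beta<1$, and then $\|v^{D^c}_{t+1}\|\le\beta\|v^{D^c}_t\|+\|\eta_t\|$ with $\|\eta_t\|\to0$ forces $v^{D^c}_t\to0$, completing the argument.

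The routine stage is the first; the main obstacle is the propagation, and specifically justifying $\rho(A_{D^c D^c})<1$ and the contraction estimate. The spectral-radius claim rests on the fact that a proper principal submatrix of an irreducible nonnegative matrix has a strictly smaller Perron root, which uses $D\neq\emptyset$ and the irreducibility of $A$ (equivalently, that every agent who cannot distinguish $\theta$ still has a directed path to some agent who can). Care is also needed because the convergences above hold only on a full-probability set and in the paper's asymptotic almost-sure sense, so the contraction argument should be carried out pathwise on that set, where all the limits are genuine.
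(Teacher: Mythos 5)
Your proposal is correct, and its first stage coincides with the paper's: fix the revealing sequence $(\hat s_{i,1},\dots,\hat s_{i,\hat k_i})$ from Theorem \ref{k_step_st}, apply Theorem \ref{m_to_l_k} to conclude $\sum_{\theta'}\mu_{i,t}(\theta')\,\ell_i(\hat s_{i,1},\dots,\hat s_{i,\hat k_i}\vert\,\theta')/\ell_i(\hat s_{i,1},\dots,\hat s_{i,\hat k_i}\vert\,\theta^\ast)\to 1$, strip off the $\bar\Theta_i$ terms whose ratio is one, and use $1-\delta_i>0$ to force $\sum_{\theta'\notin\bar\Theta_i}\mu_{i,t}(\theta')\to 0$; this is exactly the paper's argument. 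Where you genuinely diverge is the propagation stage. The paper's route is more elementary: since $\mu_{i,t+1}(\theta)\to0$ for an agent $i$ with $\theta\notin\bar\Theta_i$, and every term in the update (\ref{eq:dynamic}) is nonnegative, the neighbor term $\sum_{j\in\mathcal N_i}a_{ij}\mu_{j,t}(\theta)$ is squeezed to zero, so each $\mu_{j,t}(\theta)\to0$ for $j\in\mathcal N_i$ with $a_{ij}>0$; iterating this peeling along directed paths, strong connectivity carries the conclusion to every agent, with no linear algebra and no need to control the diagonal factor $\ell_i(\omega_{i,t+1}\vert\,\theta)/m_{i,t}(\omega_{i,t+1})$ for agents outside $D$. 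Your route instead treats $\mu_t(\theta)$ as a linear time-varying system, and its correctness rests on three facts you supply accurately: for $i\in D^c$ one has $\theta\in\bar\Theta_i$, so the diagonal factor equals $\ell_i(\omega_{i,t+1}\vert\,\theta^\ast)/m_{i,t}(\omega_{i,t+1})\to1$ pathwise by Theorem \ref{m_to_l} together with finiteness of $S_i$ and $\ell_i(\cdot\vert\,\theta^\ast)>0$; the Perron--Frobenius fact that a proper principal submatrix of the irreducible stochastic matrix $A$ (irreducibility from strong connectivity and positivity of the weights) has spectral radius strictly below one, which is where $D\neq\emptyset$ enters; and the standard perturbed-contraction estimate $\|v^{D^c}_{t+1}\|\le\beta\|v^{D^c}_t\|+\|\eta_t\|$ with $\beta<1$ and $\eta_t\to0$, carried out pathwise on the full-measure set where all the limits are genuine, as you note. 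The trade-off: the paper's peeling argument is shorter and needs strictly fewer ingredients, while your spectral argument is quantitative --- it exhibits a uniform geometric contraction of the beliefs of the non-distinguishing agents once the coupling input vanishes, which points toward the exponential rate of learning mentioned in the paper's conclusion and makes the mechanism (consensus dynamics dominated by a substochastic block) explicit. One minor caveat worth recording if you write this up: the set $D$ depends on the fixed $\theta\notin\bar\Theta$, so the argument should be repeated for each of the finitely many such $\theta$ and the corresponding full-measure sets intersected, and the degenerate case $D^c=\emptyset$ handled trivially.
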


\begin{proof}
First we prove that for any agent $i$, $\mu_{i,t}(\theta)\to0$ as $t\to\infty$ for all $\theta\notin\bar{\Theta}_i$ with $\mathbb{P}$-probability one. Let $\hat{k}_i$ and $(\hat{s}_{i,1},\hat{s}_{i,2},\dots,\hat{s}_{i,\hat{k}_i})$ be a positive integer and a sequence of signals respectively, that satisfy (\ref{eq:k_step_st}). By Theorem \ref{m_to_l_k}, $\,m_{i,t}(s_{i,1},\cdots,s_{i,k})\to \ell_i(s_{i,1},\cdots,s_{i,k}|\,\theta^\ast)$ with $\mathbb{P}$-probability one for any sequence of finite length. We can use this result for $\hat{s}_{i,1},\hat{s}_{i,2},\dots,\hat{s}_{i,\hat{k}_i}$ to conclude that for any $\epsilon>0$, there exist $\tilde{\Omega}_i\subseteq\Omega$ and a random variable $T_i(\omega,\epsilon)$ such that $\mathbb{P}(\tilde{\Omega}_i)=1$, and for any $\omega\in\tilde{\Omega}_i$ and $t>T_i(\omega,\epsilon)$,
\begin{equation*}
\left|\sum_\theta{\mu_{i,t}(\theta)\frac{\ell_i(\hat{s}_{i,1},\dots,\hat{s}_{i,\hat{k}_i}|\,\theta)}{\ell_i(\hat{s}_{i,1},\dots,\hat{s}_{i,\hat{k}_i}|\,\theta^\ast)}}-1\right| < \epsilon,
\end{equation*}
and therefore,
\begin{equation*}
\left|\sum_{\theta\notin\bar{\Theta}_i}{\mu_{i,t}(\theta)\frac{\ell_i(\hat{s}_{i,1},\dots,\hat{s}_{i,\hat{k}_i}|\,\theta)}{\ell_i(\hat{s}_{i,1},\dots,\hat{s}_{i,\hat{k}_i}|\,\theta^\ast)}}+\sum_{\theta\in\bar{\Theta}_i}\mu_{i,t}(\theta)-1\right| < \epsilon.
\end{equation*}
Using the result of Theorem \ref{k_step_st} we can conclude that on any sample path $\omega\in\tilde{\Omega}_i$ and $t>T_i(\omega,\epsilon)$,
\begin{equation*}
0\leq (1-\delta_i)\sum_{\theta\notin\bar{\Theta}_i}{\mu_{i,t}(\theta)} < \epsilon.
\end{equation*}
Since $\epsilon>0$ is arbitrary, this proves that $\mu_{i,t}(\theta)\to0$ as $t\to\infty$ for all $\theta\notin\bar{\Theta}_i$.

Taking limits of equation (\ref{eq:dynamic}) as $t\to\infty$ and using the result proved above shows that $\sum_{j\in\mathcal{N}_i}a_{ij}\mu_{j,t}(\theta)$ converges to zero as $t\to\infty$ and so does $\mu_{j,t}(\theta)$ for all $\theta\notin\bar{\Theta}_i$ and $j\in\mathcal{N}_i$ with $\mathbb{P}$-probability one. Proceeding in the same way and using the strong connectivity assumption, for all $j\in\mathcal{N}$, $\mu_{j,t}(\theta)\to0$ for all $\theta\notin\bar{\Theta}_i$. Thus, with $\mathbb{P}$-probability one
\begin{equation}
\label{eq:nonlin_conv}
\mu_{i,t}(\theta)\to0\quad\text{as}\quad t\to\infty\qquad\forall i\in\mathcal{N},\quad\forall{\theta}\notin\bar{\Theta}.
\end{equation} 
\end{proof}

Since $\mu_{i,t}(\cdot)$ is a probability distribution over $\Theta$ for all $i$ and $t$, we have the following corollary.
\begin{corollary}
\label{cor:main}
If Assumptions \ref{asmp1} and \ref{disting} hold, then
\begin{equation*}
\mu_{i,t}(\theta^\ast)\to 1\quad\text{as}\quad t\to\infty,
\end{equation*}
with $\mathbb{P}$-probability one for all $i\in\mathcal{N}$.
\end{corollary}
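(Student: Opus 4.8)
The plan is to first establish the conclusion one agent at a time for the states that agent can personally distinguish, and then to let strong connectivity carry the conclusion to the whole network.

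First I would fix an agent $i$ and work with the ``revealing'' signal block $(\hat{s}_{i,1},\dots,\hat{s}_{i,\hat{k}_i})$ supplied by Theorem \ref{k_step_st}. Applying Theorem \ref{m_to_l_k} to this block gives $m_{i,t}(\hat{s}_{i,1},\dots,\hat{s}_{i,\hat{k}_i})\to\ell_i(\hat{s}_{i,1},\dots,\hat{s}_{i,\hat{k}_i}\vert\,\theta^\ast)$ almost surely, and since each $\ell_i(s_i\vert\,\theta^\ast)>0$ the limit is strictly positive, so I may divide through by it. Writing $m_{i,t}$ as $\sum_\theta\mu_{i,t}(\theta)\,\ell_i(\hat{s}\vert\,\theta)$ and dividing, I obtain $\sum_\theta\mu_{i,t}(\theta)\,\ell_i(\hat{s}\vert\,\theta)/\ell_i(\hat{s}\vert\,\theta^\ast)\to1$. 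The point now is that this likelihood ratio equals exactly $1$ for every $\theta\in\bar{\Theta}_i$ (by the definition of $\bar{\Theta}_i$ the per-signal likelihoods coincide, hence so do their products) and is bounded by $\delta_i<1$ for every $\theta\notin\bar{\Theta}_i$ by (\ref{eq:k_step_st}). Splitting the sum accordingly and using $\sum_\theta\mu_{i,t}(\theta)=1$ collapses everything into a clean squeeze, $0\le(1-\delta_i)\sum_{\theta\notin\bar{\Theta}_i}\mu_{i,t}(\theta)\to0$, which forces $\mu_{i,t}(\theta)\to0$ for every $\theta\notin\bar{\Theta}_i$.

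Next I would propagate this along the graph. Fix a state $\theta$ and an agent $i$ with $\theta\notin\bar{\Theta}_i$, so $\mu_{i,t}(\theta)\to0$ by the previous step. Taking $t\to\infty$ in the update (\ref{eq:dynamic}) for agent $i$, the Bayesian term $a_{ii}\mu_{i,t}(\theta)\,\ell_i(\omega_{i,t+1}\vert\,\theta)/m_{i,t}(\omega_{i,t+1})$ vanishes, because its first factor tends to zero while the likelihood ratio stays bounded (the denominator converges to $\ell_i(\omega_{i,t+1}\vert\,\theta^\ast)>0$ by Theorem \ref{m_to_l}). Since the left-hand side $\mu_{i,t+1}(\theta)$ also tends to zero, the remaining nonnegative sum $\sum_{j\in\mathcal{N}_i}a_{ij}\mu_{j,t}(\theta)$ must tend to zero; as each weight $a_{ij}$ with $j\in\mathcal{N}_i$ is strictly positive, every such neighbor satisfies $\mu_{j,t}(\theta)\to0$ as well. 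Repeating the same argument at those neighbors and walking along directed paths, strong connectivity (Assumption \ref{asmp1}) guarantees that the conclusion reaches every agent, so $\mu_{j,t}(\theta)\to0$ for all $j\in\mathcal{N}$.

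Finally I would assemble the two steps. Any $\theta\notin\bar{\Theta}=\bigcap_j\bar{\Theta}_j$ fails to lie in some $\bar{\Theta}_i$, and the propagation argument then yields $\mu_{j,t}(\theta)\to0$ for all $j$; since $\Theta$ is finite this holds simultaneously for all such $\theta$, giving the claim. I expect the first step to be the conceptual crux: the whole point is that individual signals need not be revealing (cf. Example \ref{eg}), so single-step forecasts are useless and one is forced to exploit the multi-step forecast convergence of Theorem \ref{m_to_l_k} together with the existence of a revealing block from Theorem \ref{k_step_st}. The propagation step is then routine once one verifies the two features that make it run: that the likelihood ratio in the Bayesian term stays bounded (guaranteed by weak merging), and that strong connectivity together with strictly positive edge weights lets a single vanishing belief spread to the entire network.
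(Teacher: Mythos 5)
Your argument is correct and takes essentially the same route as the paper: your first two steps reproduce, almost verbatim, the paper's proof of Theorem \ref{thm_main} (the revealing block from Theorem \ref{k_step_st} fed into the $k$-step forecast convergence of Theorem \ref{m_to_l_k}, the squeeze $0\le(1-\delta_i)\sum_{\theta\notin\bar{\Theta}_i}\mu_{i,t}(\theta)\to0$, then propagation along directed paths via strong connectivity and positive weights), after which the paper's own proof of the corollary is exactly your final normalization step, since $\mu_{i,t}(\cdot)$ is a probability distribution and Assumption \ref{disting} gives $\bar{\Theta}=\{\theta^\ast\}$. The only difference is expository, not mathematical: you re-derive Theorem \ref{thm_main} inline (spelling out, slightly more carefully than the paper, why the Bayesian term vanishes because $m_{i,t}(\omega_{i,t+1})$ stays bounded away from zero by Theorem \ref{m_to_l}), where the paper simply invokes that theorem.
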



\section{Conclusion and Future Direction}
We analyzed a model of social learning where the agents form their beliefs about an unknown state of the world by taking the convex combination of their Bayesian posterior and the beliefs of their neighbors. Agents will learn the true state of the world if, (a) the social network is strongly connected, (b) all agents have strictly positive self-reliances, and (c) there exists an agent with positive prior belief on the true state. Furthermore, none of these assumptions can be relaxed in general. We also argued that there does not need to exist a revealing signal for learning to happen. Rather, there are always long enough signal sequences that are revealing. We also proved that once the immediate future forecasts of the agents become approximately correct, they can be extended to indefinite future with a negligible error.

We also have shown that, under the same assumptions, not only the agents eventually learn the true state of the world, but also they do so exponentially fast with an exponent that depends both on network topology and on signal structure. Therefore, even though network structure had a minimal effect on the possibility of learning through requiring strong connectivity, it influences the rate of learning in a more complicated and subtle way. Moreover, as a corollary of exponential learning, eventually the consensus dynamic becomes dominant. Hence, even if the true state is not distinguishable, agents eventually reach a consensus.

As we mentioned in Section \ref{sec:intro}, the model of social learning where the agents exchange their beliefs with their neighbors can be regarded as a trivial game of imperfect information where the actions of the agents completely reveal their beliefs. We would like to extend the model to nontrivial games where the observed actions only contain limited information about the underlying beliefs. Then, to be able to use the simple update discussed in this paper, agents have to estimate the belief of their neighbors based on the history of their observed actions. We are interested in quantifying how, and to what extent unobservability of the beliefs will change the results of this paper.


%
\section*{Acknowledgments}
The authors would like to thank Alireza Tahbaz-Salehi for careful review of our proofs and many helpful comments and discussions. This research was supported in parts by ONR MURI No. N000140810747, and AFOSR Complex Networks Program.

\bibliographystyle{IEEEtran}
\bibliography{learningNW}

\end{document}